\documentclass{article}
\usepackage{amsmath}
\usepackage{amssymb}
\usepackage{amsthm}

\newcommand{\comment}[1]{}

\newtheorem{Theorem}{Theorem}
\newtheorem{Lemma}{Lemma}
\newtheorem{Corollary}{Corollary}

\title{Optimal bounds on the classical value of the repeated CHSH game}
\author{Andris Ambainis\thanks{Department of Computing, University of Latvia, e-mail: ambainis@lu.lv}}
\date{}

\begin{document}

\maketitle

\begin{abstract}
We show that the maximum winning probability in the repeated CHSH game for classical strategies is at most $\left(\frac{1+\sqrt{5}}{4}\right)^n$. 
Together with a matching lower bound by Barak et al. \cite{B+}, this determines the asymptotic value of the repeated CHSH game exactly. 
We also show that, if an XOR game has a gap between classical and quantum values, there is also a gap between the asymptotic values for 
the repeated game.
\end{abstract}

\section{Introduction}

Non-local games \cite{Cleve} provide a simple and transparent way to demonstrate the distinction between classical and quantum correlations. In a non-local game, two spatially separated players receive correlated questions from a referee and must produce answers without communicating. Their maximum winning probability when they share only classical randomness is called the classical value of the game, whereas allowing the players to share entanglement gives its quantum, or entangled, value. 

The best-known example is the CHSH game, introduced in connection with Bell’s theorem \cite{CHSH69,Cleve}. In one instance of the CHSH game, Alice and Bob receive independent uniformly random bits $a, b$ and have to produce output bits $x, y$, winning if
\[
x\oplus y = a b .
\]
Classically they can win with probability at most $val(CHSH) = 3/4$, while an entangled strategy achieves the strictly larger value
$val^*(\mathrm{CHSH})
=\cos^2(\pi/8)
=\frac{2+\sqrt 2}{4}$.
This separation is one of the simplest manifestations of quantum non-locality.

A basic operation on non-local games is parallel repetition, in which several independent instances of a game are played simultaneously and the players must win all of them. Despite the simplicity of its definition, its effect on the value of a game is very nontrivial, giving rise to a long line of research and substantially different results for different classes of games and strategies ~\cite{Raz98,CSUU08,Hol09,DS,DSV15}.

In the repeated CHSH game $CHSH^{\otimes n}$, the referee samples $n$ independent instances of CHSH, and the players win only if they win every instance. For entangled quantum strategies, XOR games (a class of games which includes CHSH) satisfy perfect parallel repetition \cite{CSUU08}. This means that the winning probability, $val^*(\mathrm{CHSH}^{\otimes n})$ is equal to
$(val^*(\mathrm{CHSH}))^n =\left(\frac{2+\sqrt2}{4}\right)^n$.

In contrast, determining the classical winning probability is much more difficult. Playing an optimal single-copy strategy independently gives $val(G^{\otimes n})\geq val(G)^n$.
The reverse inequality, however, need not hold: each player may correlate their answers across coordinates, even though the questions themselves are independent. This is true already for two copies: although $val(\mathrm{CHSH})=3/4$, two copies can be won classically with probability $val(\mathrm{CHSH}^{\otimes 2})
=\frac{10}{16}
>\left(\frac34\right)^2$ \cite{CSUU08}. For 3 copies, we have $val(\mathrm{CHSH}^{\otimes 2})
= \frac{31}{64}$ \cite{CSUU08,Araujo}.

General parallel-repetition theorems, beginning with Raz’s theorem \cite{Raz98} and its subsequent refinements, imply that the classical value of any nontrivial two-player game decreases exponentially under the repetition. They generally do not, however, determine the correct exponential rate for a particular game. It is therefore natural to consider the asymptotic value (whose limit exists by supermultiplicativity and Fekete's lemma)
$val_\infty(G)
:=\lim_{n\to\infty}val(G^{\otimes n})^{1/n}$.
For CHSH, Barak, et al. \cite{B+} constructed correlated strategies showing that
\[
val_\infty(\mathrm{CHSH})
\geq \cos(\pi/5)
=\frac{1+\sqrt5}{4}
\approx 0.809016.
\]
This is strictly larger than both the  rate $\sqrt[3]{31}/4$ obtained from repeating the optimal three-copy strategy.

Our main result gives the matching upper bound. We prove that, for every $n\geq 1$, we have
$val(\mathrm{CHSH}^{\otimes n})
\leq
\left(\frac{1+\sqrt5}{4}\right)^n$.
Together with the construction of \cite{B+}, this determines the asymptotic classical value exactly:
$val_\infty(\mathrm{CHSH})
=\frac{1+\sqrt5}{4}$.

Therefore, the repeated CHSH game exhibits distinct classical and quantum exponential rates, approximately 0.809016 and 0.853553, respectively.

The proof is short and uses the analytic framework of Dinur and Steurer \cite{DS}. Their framework considers a relaxation 
$\operatorname{val}^{+}(G)$ of the non-local game that upper-bounds the classical value and behaves multiplicatively under tensor products. Consequently,
$
val(G^{\otimes n})
\leq \operatorname{val}^{+}(G)^n.
$
We define a further relaxation, $\operatorname{val}^{++}(G)$ and show that,  for CHSH, symmetry reduces the computation of this relaxation to an elementary optimization over four nonnegative variables, whose optimum is exactly $(1+\sqrt5)/4$. 

We also show that the behaviour in this example is typical in the following sense: if there is a gap 
between the single-copy values $val(G)$ and $val^*(G)$, there is also a gap between the asymptotic value $val_{\infty}(G)$ and 
$val^*(G)$.

The results in this paper were originally obtained in 2014 and reported in \cite{A}. Due to the ongoing interest in the topic, we have decided to publish them.

\section{Setting}

A two-player non-local game \(G\) is specified by finite sets \(A\) and
\(B\) of questions, finite sets \(X\) and \(Y\) of answers, a probability
distribution \(p\) on \(A\times B\), and a family of predicates specifiying the winning conditions
\[
P_{a,b}:X\times Y\longrightarrow\{0,1\}.
\]
The referee samples a pair of questions \((a,b)\) according to \(p\),
sends \(a\) to Alice and \(b\) to Bob, and receives answers \(x\in X\)
and \(y\in Y\), respectively. The players win if
\(P_{a,b}(x,y)=1\).

A deterministic classical strategy is given by functions \(f:A\to X\)
and \(g:B\to Y\). Its winning probability is
\[
\sum_{a\in A,\,b\in B}
p(a,b)P_{a,b}\bigl(f(a),g(b)\bigr).
\]
The classical value of \(G\), denoted by \(\operatorname{val}(G)\), is
the maximum of this expression over all \(f\) and \(g\). Allowing Alice
and Bob to share classical randomness does not increase the value,
since every randomized strategy is a convex combination of
deterministic ones.

In a quantum strategy, Alice and Bob share a bipartite quantum state
\(\rho\), prepared before they receive their questions. For each
\(a\in A\), Alice applies a measurement
\(\{M_x^a\}_{x\in X}\), and for each \(b\in B\), Bob applies a
measurement \(\{N_y^b\}_{y\in Y}\). The entangled value of \(G\),
denoted by \(\operatorname{val}^*(G)\), is the supremum of
\[
\sum_{a,b,x,y}
p(a,b)P_{a,b}(x,y)
\operatorname{Tr}\!\left( \left(
\,M_x^a\otimes N_y^b \right) \rho
\right)
\]
over all finite-dimensional bipartite states and local measurements.

An XOR game is a game with \(X=Y=\{0,1\}\) for which, for every
\((a,b)\) in the support of \(p\), the winning condition is such that\begin{equation}
\label{eq:xor}
P_{a,b}(x,y)=1
\quad\Longleftrightarrow\quad
x\oplus y=s(a,b)
\end{equation}
for some  \(s(a,b)\in\{0, 1\}\). CHSH is an example of a XOR game: \(A=B=X=Y=\{0,1\}\), the distribution \(p\) is uniform,
and \(s(a,b)=ab\).

A projection game is a game in which, for every \((a,b)\) in the support
of \(p\) and every \(y\in Y\), there is at most one \(x\in X\) such
that \(P_{a,b}(x,y)=1\). In particular, every XOR game is a projection
game. For CHSH, \(A=B=X=Y=\{0,1\}\), the distribution \(p\) is uniform,
and \(s(a,b)=ab\).

For a positive integer \(k\), the \(k\)-fold parallel repetition
\(G^{\otimes k}\) is defined by sampling \(k\) independent question
pairs
\[
(a_1,b_1),\ldots,(a_k,b_k)
\]
according to \(p\). Alice receives \((a_1,\ldots,a_k)\), Bob receives
\((b_1,\ldots,b_k)\), and they return answer tuples
\((x_1,\ldots,x_k)\) and \((y_1,\ldots,y_k)\). They win if
\[
P_{a_i,b_i}(x_i,y_i)=1
\qquad\text{for every }i\in\{1,\ldots,k\}.
\]
Using independent product strategies in every coordinate gives
\[
\operatorname{val}(G^{\otimes k})
\geq \operatorname{val}(G)^k,
\qquad
\operatorname{val}^*(G^{\otimes k})
\geq \operatorname{val}^*(G)^k.
\]
These inequalities need not be equalities, because the players may correlate their answers across coordinates. In an entangled strategy, they may also use joint measurements depending on the entire tuple of questions.
For the entangled value of XOR
games, however, perfect parallel repetition holds~\cite{CSUU08}:
\[
\operatorname{val}^*(G^{\otimes k})
= \operatorname{val}^*(G)^k.
\]

\section{Classical parallel repetition of CHSH}

In this section, we prove the main result of this paper.

\begin{Theorem}
\label{thm:main}
\[ val(CHSH^{\otimes n}) \leq \left( \frac{1+\sqrt{5}}{4} \right)^n .\]
\end{Theorem}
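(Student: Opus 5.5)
The plan is to follow the analytic framework of Dinur and Steurer \cite{DS} and bound the classical value by a relaxation that is multiplicative under parallel repetition. For a projection game $G$ (CHSH is one, being an XOR game), take nonnegative functions $f:A\times X\to\mathbb{R}_{\geq 0}$ and $g:B\times Y\to\mathbb{R}_{\geq 0}$. Consider the bilinear form
\[
\langle f,Gg\rangle=\sum_{a,b}p(a,b)\sum_{x,y}P_{a,b}(x,y)f(a,x)g(b,y),
\]
maximized subject to normalizations $\|f\|=\|g\|=1$ in the mixed $\ell_1/\ell_2$ norm $\|f\|^2=\mathbb{E}_a\bigl(\sum_x f(a,x)\bigr)^2$, the expectation taken over the marginal of $p$. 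Call this maximum $\operatorname{val}^{+}(G)$.

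Two facts are needed, both of which I will take from \cite{DS}:
\begin{enumerate}
\item A deterministic strategy, encoded as indicator functions, has norm $1$ and objective equal to its winning probability, so $\operatorname{val}(G)\leq\operatorname{val}^{+}(G)$.
\item For projection games $\operatorname{val}^{+}(G\otimes H)=\operatorname{val}^{+}(G)\operatorname{val}^{+}(H)$.
\end{enumerate}
Together these give $\operatorname{val}(\mathrm{CHSH}^{\otimes n})\leq\operatorname{val}^{+}(\mathrm{CHSH})^n$. It therefore suffices to show $\operatorname{val}^{+}(\mathrm{CHSH})\leq(1+\sqrt5)/4$.

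To make the computation tractable I will pass to a further, more symmetric relaxation $\operatorname{val}^{++}\geq\operatorname{val}^{+}$. The CHSH game has a symmetry group generated by three operations:
\begin{enumerate}
\item flipping both outputs simultaneously;
\item exchanging Alice and Bob;
\item the maps $a\mapsto a\oplus 1$ together with $y\mapsto y\oplus b$, and the analogous map for Bob, which preserve $x\oplus y=ab$.
\end{enumerate}
Because the objective is bilinear and the norms are convex but not linear, averaging $f,g$ over the group does not directly preserve the constraint. So I will first decouple using $\langle f,Gg\rangle\leq\frac12(\lambda\langle\cdot\rangle+\lambda^{-1}\langle\cdot\rangle)$, or by relaxing the normalization to a linear-in-squares constraint. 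The aim is to reduce to an expression in which one may assume $f=g$ up to the player swap, and then impose invariance under the output-flip symmetry. After this the optimization should be over four nonnegative numbers $f(0,0),f(0,1),f(1,0),f(1,1)$, with the constraint
\[
\tfrac12\bigl(f(0,0)+f(0,1)\bigr)^2+\tfrac12\bigl(f(1,0)+f(1,1)\bigr)^2=1
\]
and the objective being the explicit quadratic form given by the CHSH predicate.

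The final step is to maximize this quadratic form over the four variables. I expect the optimum to be the top eigenvalue of a small matrix whose characteristic polynomial is $t^2-t-1$ (scaled by $1/2$), which yields $\varphi/2=(1+\sqrt5)/4$. I would verify this by Lagrange multipliers. One must also check boundary cases where some variable is zero; the optimum should put mass like $1$ and $1/\varphi$ on the two answers for one question, mirroring the Barak et al.\ strategy. The main obstacle is the symmetrization step: the relaxation $\operatorname{val}^{++}$ must be defined so that it still upper-bounds $\operatorname{val}^{+}$ and admits the symmetric reduction. I would first try defining $\operatorname{val}^{++}$ as the maximum of the Cauchy--Schwarz-symmetrized quadratic form $\langle f,Gf'\rangle$, with $f'$ the swap image of $f$, over a single normalized $f$, and then argue by concavity of the constraint set in squared variables that group-averaging only helps.
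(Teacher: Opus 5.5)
\textbf{The gap is the symmetrization step} that is supposed to produce your four-variable problem. You are maximizing a bilinear form that is not concave, so there is no reason the optimum is symmetric. Convexity of the constraint in the squared variables says nothing about the objective. Neither of your two reductions is justified, and the second one fails.

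\emph{Setting $g$ equal to the swap image of $f$} would need $\langle f,Gg\rangle\le\frac12(\langle f,Gf\rangle+\langle g,Gg\rangle)$, i.e.\ $G$ positive semidefinite. But if you index both players by $(a,x)$, the CHSH matrix is $\frac14\begin{pmatrix} I & I\\ I & X\end{pmatrix}$ with $X=\begin{pmatrix}0&1\\1&0\end{pmatrix}$, which has eigenvalue $-\sqrt2/4$.

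\emph{Invariance under the simultaneous output flip} is outright harmful. It forces $f(a,0)=f(a,1)$ and $g(b,0)=g(b,1)$. Then, with $F_a=\sum_x f(a,x)$ and $K_b=\sum_y g(b,y)$, the objective equals $\frac18(F_0+F_1)(K_0+K_1)\le\frac12<\operatorname{val}(\mathrm{CHSH})$. So the symmetrized problem is a restriction, not a relaxation, and it bounds nothing.

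Your picture of the optimizer is off for the same reason. At the optimum each question puts all its weight on a single answer. The ratio $\varphi:1$ is between the weights of a player's two \emph{questions}, not between two answers to one question.

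\textbf{What is needed is the opposite move: go to extreme points, not symmetric points.} In your formulation the constraint depends only on $F_a$ and $K_b$. The objective is multilinear in the answer distributions $f(a,\cdot)/F_a$ and $g(b,\cdot)/K_b$, so you may take these deterministic. The paper gets the same conclusion differently: it reduces vector solutions to one coordinate via $\max_i B_i/A_i$ and then uses the disjoint-support condition.

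A solution is then a deterministic strategy together with question weights $u_0,u_1,v_0,v_1\ge 0$ satisfying $\frac14(u_0^2+u_1^2+v_0^2+v_1^2)=1$. Every deterministic strategy wins on one or three question pairs, and three dominates. After relabeling, the objective is $\frac14(u_0v_0+u_0v_1+u_1v_0)$, whose maximum is
$\frac12\left\|\begin{pmatrix}1&1\\1&0\end{pmatrix}\right\|=\frac{1+\sqrt5}{4}$.
This $2\times 2$ win-pattern matrix over question pairs is where $t^2-t-1$ actually enters.

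\textbf{A smaller point.} The relaxation you define, with normalization averaged over questions, is essentially the paper's $\operatorname{val}^{++}$. It is not the Dinur--Steurer $\operatorname{val}^{+}$, which normalizes each question separately and uses nonnegative vectors with disjoint supports. No multiplicativity is claimed for the averaged version. Use the chain $\operatorname{val}(G^{\otimes n})\le\operatorname{val}^{+}(G)^n\le\operatorname{val}^{++}(G)^n$ instead.
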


The exponential rate in this bound is tight: Barak et al. \cite{B+} have shown that $val_{\infty}(CHSH) \geq \frac{1+\sqrt{5}}{4}$. 
Together with Theorem \ref{thm:main}, this determines $val_{\infty}(CHSH)$ exactly.

\subsection{Relaxations for the classical value}

In this section, we introduce two relaxations for the classical value of $G$, with the property that
$val (G^{\otimes n}) \leq val^+(G)^n \leq val^{++}(G)^n$. This reduces bounding
$val (CHSH^{\otimes n})$ to bounding $val^{++}(CHSH)$. We start by describing 
an expression for $val(G)$ in a form suitable for defining relaxations.

{\bf Exact expression for $val(G)$.} 
Let $c=(a, x)$ and $d=(b, y)$. Let $G$ be a matrix whose entries $G_{c, d}$ are equal to
$p(a, b)$ whenever $c=(a, x)$, $d=(b, y)$ and Alice and Bob win if they answer $x$ and $y$ to
questions $a$ and $b$. Otherwise, $G_{c, d}=0$. Since $G$ contains full information about 
the game, we can identify $G$ with the game (and use $G$ to denote both the game and its 
matrix).

A deterministic strategy for Alice can be described by a vector $u$ indexed by $c=(a, x)$ and defined as follows. 
$u_c=1$ if, given $a$, Alice answers $x$. $u_c=0$ otherwise. We can describe a deterministic strategy for Bob by a vector $v$ indexed by $d=(b, y)$
in a similar way. We note that $u$ and $v$ have the following properties: 
\begin{enumerate}
\item
$u_{a, x}, v_{b, y}\in\{0, 1\}$.
\item
for each $a$ (or $b$), there is exactly one $x:u_{a, x} = 1$ ($y: v_{b, y} = 1$, respectively).
\end{enumerate}
The probability of Alice and Bob winning if they play strategies $u$ and $v$ is equal to 
the inner product 
\[ (u, Gv) = \sum_{c, d} u_{c} G_{c, d} v_d.\] 
The value of the game $val(G)$ is the maximum of
$(u, Gv)$ over all $u, v$ that satisfy the requirements above.

{\bf Dinur-Steurer relaxation.}
Let $G$ be a two player projection game. 
The relaxed value $val^+(G)$ of the game \cite{DS} is defined as follows. 
We modify the optimization problem for $val(G)$ by allowing 
$u_c$ and $v_d$ to be vectors in $\mathbb{R}^m$
(for some $m$) which satisfy the following properties:  
\begin{enumerate}
\item
each coordinate of each $u_c$ or $v_d$ is nonnegative;
\item
for each $a$ (or $b$), we have $\sum_x \|u_{(a, x)}\|^2 =1$
(or $\sum_y \|v_{(b, y)}\|^2 =1$);
\item
if $(u_c)_i$ (or $(v_d)_i$) denotes the $i^{\rm th}$ coordinate of $u_c$ ($v_d$), 
then, given $(u_{a, x})_i$ and $(u_{a, x'})_i$, with the same question $a$ but different answers $x$ and $x'$, at most one of them is nonzero.
\end{enumerate}
The relaxed value $val^+(G)$ is equal to the maximum of
\[ \sum_{c, d} G_{c, d}(u_c, v_d) \]
over all $u, v$ that satisfy the requirements above.
It is easy to see that $val(G)\leq val^+(G)$ (because the numbers 
$u_c$ and $v_d$ in the definition of $val(G)$ can be interpreted as one-dimensional vectors).
Dinur and Steurer \cite{DS} have shown

\begin{Lemma}\cite{DS}
$val(G^{\otimes k}) \leq (val^+(G))^k$ for any projection game $G$. 
\end{Lemma}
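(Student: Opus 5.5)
The proof has two ingredients. First, $val(G)\le val^+(G)$ for every projection game, which the paper has already observed. Second, $val^+$ is \emph{submultiplicative}: for projection games $G$ and $H$, $val^+(G\otimes H)\le val^+(G)\,val^+(H)$. Granting both, and using that a tensor product of projection games is again a projection game,
\[
val(G^{\otimes k})\le val^+(G^{\otimes k})\le val^+(G)^k
\]
follows by induction on $k$. So all the work lies in submultiplicativity.

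To prove submultiplicativity, I would first recast $val^+$ in operator form. For a projection game, for each $(a,b)$ and each answer $y$ there is at most one $x=\pi_{a,b}(y)$ with $P_{a,b}(x,y)=1$. The objective can therefore be written as
\[
\sum_{a,b}p(a,b)\sum_y \bigl(u_{a,\pi_{a,b}(y)},v_{b,y}\bigr).
\]
Call a collection $u=\{u_{a,x}\}$ admissible if it satisfies constraints 1 and 3. Define $(Gu)_{b,y}=\sum_a p(a,b)\,u_{a,\pi_{a,b}(y)}$, which is nonnegative since $u$ is. By Cauchy--Schwarz over $(b,y)$, optimizing over $v$ subject to $\sum_y\|v_{b,y}\|^2=1$ gives a bound of the form
\[
val^+(G)\ \text{controlled by}\ \max_u\ \sum_b \Bigl(\sum_y\|(Gu)_{b,y}\|^2\Bigr)^{1/2},
\]
maximized over admissible, normalized $u$. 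The idea is to view $val^+(G)$ as the norm of $G$ acting between two mixed norms: an $\ell_2$ norm over answers inside, and an $\ell_\infty$ or $\ell_1$ norm over questions outside, restricted to the cone of nonnegative vectors with the disjoint-support constraint 3.

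Next I would factor $G\otimes H=(G\otimes I)(I\otimes H)$ and bound each factor separately. Given admissible vectors for $G\otimes H$, indexed by $((a,a'),(x,x'))$, I would fix the $G$-coordinates $(a,x)$ and regard the family indexed by $(a',x')$ as an admissible assignment for $H$. Its coordinates are nonnegative, and constraint 3 holds within each question $a'$. Applying the $H$-bound loses at most a factor $val^+(H)$. I would then regroup the resulting vectors, concatenating over the $H$-indices into a larger $\mathbb R^{m'}$, into an admissible assignment for $G$ and apply the $G$-bound to lose a further factor $val^+(G)$.

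The main obstacle is the regrouping step. I must check that after applying $I\otimes H$, the new vectors indexed by $(a,x)$ still satisfy the disjoint-support constraint 3 for different $x$ with the same $a$, and still carry the right normalization. Here the projection property is essential: $(Hu)_{b',y'}$ is a nonnegative combination of vectors $u_{a',\pi(y')}$, so supports are never merged across distinct $G$-answers $x$, and nonnegativity prevents cancellation. For the normalization I would first try Cauchy--Schwarz in the outer question sum. If the mixed norm does not tensorize cleanly, the fallback is to follow Dinur--Steurer more closely and define $val^+$ via the ratio $\|Gf\|/\|f\|$ over nonnegative admissible $f$, where multiplicativity of the operator norm under tensoring, restricted to this cone, is easier to verify.
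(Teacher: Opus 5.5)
The paper does not prove this lemma; it imports it from \cite{DS}. Reducing to $val^+(G\otimes H)\le val^+(G)\,val^+(H)$ is a natural plan, but that inequality is the entire content, and your sketch does not establish it.

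First, your reformulation is lossy. When you optimize $v$ by Cauchy--Schwarz subject only to $\sum_y\|v_{b,y}\|^2=1$, you drop Bob's disjoint-support constraint. So $Q(G):=\max_u\sum_b\bigl(\sum_y\|(Gu)_{b,y}\|^2\bigr)^{1/2}$ is only an upper bound on $val^+(G)$, and it is strictly larger. For CHSH, take the one-dimensional admissible $u$ with $u_{(0,0)}=u_{(1,1)}=1$ and the other two entries $0$. It gives $Q\ge\frac{\sqrt2}{4}+\frac12=\frac{2+\sqrt2}{4}$, whereas the paper shows $val^+(CHSH)\le val^{++}(CHSH)\le\frac{1+\sqrt5}{4}$. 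So even a perfect tensorization of your mixed norm would yield $val(G^{\otimes k})\le Q(G)^k$, not the lemma. For CHSH that bound is never below the quantum value.

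Your fallback has the same defect. A one-sided ratio $\|Gf\|/\|f\|$ with averaged norms dominates $Q$ by Jensen; for CHSH the same $u$ gives $\sqrt3/2$. It therefore cannot be bounded by the two-sided, per-question-normalized $val^+$ that the lemma is about.

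Second, the regrouping step fails on both counts you flagged. The sum defining $I\otimes H$ runs over $a'$, that is, over different questions $(a,a')$ of the product game. Constraint 3 for $G\otimes H$ only separates answers to the \emph{same} question, and nonnegativity rules out cancellation, not overlap. Concretely, take $G=H=CHSH$ and the one-dimensional deterministic strategy in which Alice answers $(x,x')=(0,0)$ on question $(a,0)$ and $(1,0)$ on question $(a,1)$. The regrouped vectors indexed by $(a,0)$ and $(a,1)$ then both have entry $\frac14$ in block $(b',y')=(0,0)$, so they are not disjoint.

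Normalization also breaks. For fixed $(a,x)$, the $H$-family has masses $\sum_{x'}\|u_{(a,a'),(x,x')}\|^2$ that depend on $a'$ (here $1$ for $a'=0$ and $0$ for $a'=1$), so it is not feasible for $val^+(H)$. Rescaling by $\max_{a'}$ assigns weight $1$ to both answers of question $a$, and that cannot be fed back into $val^+(G)$.

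What is missing is a tensorization argument for the two-sided, per-question-normalized quantity itself, in which the intermediate object stays feasible for both players' constraints. As written, your outline does not supply one.
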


{\bf Second relaxation.}
We now introduce an easier to compute relaxation $val^{++}(G)$ which satisfies
$val^+(G)\leq val^{++}(G)$.

$val^{++}(G)$ is defined by replacing the requirement that $\sum_x \|u_{(a, x)}\|^2 =1$
($\sum_y \|v_{(b, y)}\|^2 =1$) with a similar requirement about averages of these quantities.
That is, we require that 
\begin{equation}
\label{eq:req} 
\frac{ E_a \left[\sum_x \|u_{(a, x)}\|^2\right]+ 
E_b \left[\sum_y \|v_{(b, y)}\|^2\right]}{2} = 1 
\end{equation}
with the expectations $E_a$, $E_b$ taken over uniformly random choice of $a$ and $b$. 

We claim that restricting the vectors \(u_c\) and \(v_d\) to be
one-dimensional does not change \(\operatorname{val}_{++}(G)\).
Fix any feasible solution consisting of vectors in \(\mathbb{R}^m\),
and denote its objective value by \(V\). For each coordinate
\(i\in\{1,\ldots,m\}\), define
\[
A_i
=
\frac{
\mathbb{E}_a\left[\sum_x (u_{(a,x)})_i^2\right]
+
\mathbb{E}_b\left[\sum_y (v_{(b,y)})_i^2\right]
}{2}
\]
and
\[
B_i
=
\sum_{c,d}G_{c,d}(u_c)_i(v_d)_i.
\]
The normalization condition implies that
\[
\sum_i A_i=1,
\]
while the objective value can be written as
\[
V=\sum_i B_i.
\]
If \(A_i=0\), then all entries of the \(i\)-th coordinate that occur
in the definition of \(A_i\) vanish, and hence \(B_i=0\). Therefore,
\[
V
=
\sum_{i:A_i>0}B_i
=
\sum_{i:A_i>0}A_i\frac{B_i}{A_i}
\leq
\max_{i:A_i>0}\frac{B_i}{A_i}.
\]
Let \(i^\ast\) be an index attaining the maximum. We define
one-dimensional vectors, that is (nonnegative scalars), by
\[
\widetilde{u}_c
=
\frac{(u_c)_{i^\ast}}{\sqrt{A_{i^\ast}}},
\qquad
\widetilde{v}_d
=
\frac{(v_d)_{i^\ast}}{\sqrt{A_{i^\ast}}}.
\]
These scalars inherit the condition that, for every question, at most
one answer has a nonzero value. Moreover,
\[
\frac{
\mathbb{E}_a\left[\sum_x \widetilde{u}_{(a,x)}^2\right]
+
\mathbb{E}_b\left[\sum_y \widetilde{v}_{(b,y)}^2\right]
}{2}
=
\frac{A_{i^\ast}}{A_{i^\ast}}
=
1,
\]
so they form a feasible solution. Its objective value is
\[
\sum_{c,d}G_{c,d}\widetilde{u}_c\widetilde{v}_d
=
\frac{B_{i^\ast}}{A_{i^\ast}}
\geq V.
\]
Thus every feasible vector solution can be replaced by a feasible
one-dimensional solution of at least the same value. Consequently,
\[
\operatorname{val}_{++}(G)
=
\max_{\{u_c\},\{v_d\}}
\sum_{c,d}G_{c,d}u_cv_d,
\]
where the maximum is over nonnegative scalars \(u_c,v_d\) satisfying
the following conditions:
\begin{enumerate}
    \item For every question \(a\), at most one of the values
    \(u_{(a,x)}\) is nonzero, and, for every question \(b\), at most
    one of the values \(v_{(b,y)}\) is nonzero.

    \item The normalization condition
    \[
    \frac{
    \mathbb{E}_a\left[\sum_x u_{(a,x)}^2\right]
    +
    \mathbb{E}_b\left[\sum_y v_{(b,y)}^2\right]
    }{2}
    =
    1
    \]
    holds.
\end{enumerate}

We have $val(G) \leq val^+(G) \leq val^{++}(G)$.

\subsection{Bounds on the value of the CHSH game}

We now proceed to proving Theorem \ref{thm:main}, by showing $val^{++}(CHSH) \leq \frac{1+\sqrt{5}}{4}$.

We now solve the optimization problem and calculate $val^{++}(CHSH)$. 
Let the nonzero entries be $u_{(0, x_0)}$, $u_{(1, x_1)}$, $v_{(0, y_0)}$, $v_{(1, y_1)}$. 
The requirement on averages then becomes
\begin{equation}
\label{eq:av} \frac{u_{(0, x_0)}^2+ u_{(1, x_1)}^2+ v_{(0, y_0)}^2+v_{(1, y_1)}^2}{4}=1 
\end{equation}
and the expression to be maximized is 
\begin{equation}
\label{eq:sum} \sum_{a, b\in\{0, 1\}} G_{(a, x_a), (b, y_b)} u_{a, x_a} v_{b, y_b} .
\end{equation}
Here, $G_{(a, x_a), (b, y_b)}$ is equal to $1/4$ if the combination of answers $x_a, y_b$ is winning and 0 otherwise.
For any of the choices of $x_0, x_1, y_0, y_1\in\{0, 1\}$, there is either one or three non-zero values $G_{(a, x_a), (b, y_b)}$.
Since all terms are non-negative, a sum with one non-zero coefficient $G_{(a, x_a), (b, y_b)}$ is upper bounded by a sum with three non-zero
coefficients $G_{(a, x_a), (b, y_b)}$ one of which is in the same place as the original non-zero coefficient.

Hence, it suffices to consider the case when three of  $G_{(a, x_a), (b, y_b)}$ are non-zero.
Because of the symmetry of expression (\ref{eq:sum}) and the condition (\ref{eq:av}), we can without loss of generality assume that 
the zero value is $G_{(1, x_1), (1, y_1)}$.

For brevity, we now denote $u_{a, x_a}$ and $v_{b, y_b}$ as simply $u_a$ and $v_b$.
Then, the task is to maximize 
\[ \frac{1}{4} \left( u_0 v_0 + u_0 v_1 + u_1 v_0 \right) \]
subject to the constraint $\frac{u_{0}^2+ u_{1}^2+ v_{0}^2+v_{1}^2}{4}=1$.

We can rescale all four variables by 1/2 (setting $\tilde{u}_i = \frac{u_i}{2}$ and similarly for $v_j$'s).
Then, the expression to be maximized becomes $\tilde{u}_0 \tilde{v}_0 + \tilde{u}_0 \tilde{v}_1 + \tilde{u}_1 \tilde{v}_0 $ and
the requirement on averages becomes $\tilde{u}_{0}^2+ \tilde{u}_{1}^2+ \tilde{v}_{0}^2+\tilde{v}_{1}^2 = 1$.

Let $t = \tilde{u}_{0}^2+ \tilde{u}_{1}^2$. Then, $\tilde{v}_{0}^2+\tilde{v}_{1}^2 = 1-t$.
We express $\tilde{u}_{0}=\sqrt{t}\cos \alpha$, $\tilde{u}_{1}=\sqrt{t}\sin \alpha$,
$\tilde{v}_{0}=\sqrt{1-t}\cos \beta$, $\tilde{v}_{1}=\sqrt{1-t}\sin \beta$.
Since the variables are non-negative, we have $\alpha, \beta\in[0, \pi/2]$.
The expression to be maximized is 
\[  \sqrt{t(1-t)} (\cos\alpha\cos\beta + \cos\alpha\sin\beta + \sin\alpha\cos\beta) .\]
Regardless of $\alpha$ and $\beta$, 
the maximum of $\sqrt{t(1-t)}$ is equal to $\frac{1}{2}$ (achieved when $t=1/2$). It remains to maximize
\[ \cos\alpha\cos\beta + \cos\alpha\sin\beta + \sin\alpha\cos\beta =
\cos\alpha\cos\beta + \sin(\alpha+\beta)\]
\[ = \frac{\cos (\alpha+\beta) + \cos (\alpha-\beta)}{2} + \sin(\alpha+\beta) .\]
If we fix the sum $\alpha+\beta$ and vary $\alpha$, the only term that changes
is $\cos (\alpha-\beta)$ which reaches maximum when $\alpha=\beta$ and $\alpha-\beta=0$.
Therefore, we can assume that $\alpha=\beta$. Then, our expression-to-be-maximized is
\begin{equation}
\label{eq:1} 
\frac{\cos (2\alpha)+1}{2} + \sin(2\alpha) .
\end{equation}
Let $\gamma$ be an angle for which $\cos\gamma=\frac{1}{\sqrt{5}}$, 
$\sin\gamma=\frac{2}{\sqrt{5}}$. Then, (\ref{eq:1}) is equal to
\[ \frac{1}{2} + \frac{\sqrt{5}}{2} ( \cos (2\alpha) \cos\gamma + 
\sin (2\alpha) \sin\gamma ) 
 = \frac{1}{2} + \frac{\sqrt{5}}{2} \cos (2\alpha-\gamma) .\]
The maximum of this expression is $\frac{1+\sqrt{5}}{2}$, achieved
when $2\alpha=\gamma$. Multiplying this with $\max_t \sqrt{t(1-t)}=\frac{1}{2}$ gives 
$val^{++}(CHSH)\leq \frac{1+\sqrt{5}}{4}$.

Therefore, we have
\[ val(CHSH^{\otimes n}) \leq val^+(CHSH)^n \leq val^{++}(CHSH)^n \leq \left(\frac{1+\sqrt{5}}{4} \right)^n ,\]
 completing the proof of Theorem \ref{thm:main}.

\section{Dinur-Steurer value vs. entangled value}

We show

\begin{Theorem}
\label{thm:2}
For any XOR game $G$, we have
$val^+(G)\leq val^{*}(G)$, with equality if and only if $val(G) = val^{*}(G)$.
\end{Theorem}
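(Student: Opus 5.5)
The plan is to rewrite the Dinur--Steurer objective for an XOR game as an average of a ``trivial'' term and a Tsirelson-type bias term, and then use Tsirelson's vector characterization of the entangled bias. Fix a feasible solution $\{u_{(a,x)}\},\{v_{(b,y)}\}$ for $val^+(G)$. For each question define
\[
W_a=u_{(a,0)}+u_{(a,1)},\quad U_a=u_{(a,0)}-u_{(a,1)},\quad W'_b=v_{(b,0)}+v_{(b,1)},\quad V_b=v_{(b,0)}-v_{(b,1)} .
\]
Because $u_{(a,0)}$ and $u_{(a,1)}$ are nonnegative with disjoint supports, all four of $W_a,U_a,W'_b,V_b$ are unit vectors. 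Moreover $W_a$ is nonnegative and coordinatewise $|U_a|=W_a$; the analogous statement holds for Bob.

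Expanding the winning pairs, I get $(u_{(a,0)},v_{(b,0)})+(u_{(a,1)},v_{(b,1)})=\frac12[(W_a,W'_b)+(U_a,V_b)]$ when $s(a,b)=0$, and the analogous identity with a minus sign in front of $(U_a,V_b)$ when $s(a,b)=1$. Hence the objective equals
\[
\frac12\sum_{a,b}p(a,b)(W_a,W'_b)+\frac12\sum_{a,b}p(a,b)(-1)^{s(a,b)}(U_a,V_b).
\]
The first sum is at most $1$ by Cauchy--Schwarz. The second is at most the entangled bias $\beta^*(G)$ by Tsirelson's theorem, which says $\beta^*(G)$ is the maximum of this expression over all unit vectors. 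Since $val^*(G)=\frac12+\frac12\beta^*(G)$, this gives $val^+(G)\le val^*(G)$.

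For the equality statement, one direction is immediate: if $val(G)=val^*(G)$, then $val(G)\le val^+(G)\le val^*(G)$ forces equality throughout. For the converse, take an optimal solution and assume $val^+(G)=val^*(G)$. Attainment needs some care; if the maximum is not obviously attained, I will instead run the argument below with approximate equalities and use compactness after bounding the dimension. Equality forces $(W_a,W'_b)=1$, hence $W_a=W'_b$, for every $(a,b)$ in the support of $p$. So on each connected component of the bipartite support graph there is one common nonnegative unit vector $W$. All the $U_a,V_b$ in that component then have the form $U_a=\sigma_a\circ W$ and $V_b=\tau_b\circ W$, with sign vectors $\sigma_a,\tau_b\in\{\pm1\}^m$.

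The bias term on that component then becomes
\[
\sum_i W_i^2\sum_{a,b}p(a,b)(-1)^{s(a,b)}\sigma_a(i)\tau_b(i).
\]
This is a convex combination, with weights $W_i^2$, of the biases of deterministic classical strategies restricted to the component, so it is at most the best classical bias on that component. Distinct components involve disjoint question sets, so the best classical strategies on each can be combined into one global deterministic strategy. Summing over components gives $\beta^*(G)\le\beta(G)$, that is, $val^*(G)\le val(G)$.

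I expect the equality case to be the main obstacle. In particular I must (i) handle the connectivity of the support graph correctly, including questions of probability zero, and (ii) justify that the supremum defining $val^+$ is attained, or replace exact equalities by a limiting argument. For (ii) I would first try to show that the dimension $m$ can be bounded, using the coordinate-splitting trick from the $val^{++}$ discussion, so that the feasible set is compact.
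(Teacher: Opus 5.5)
Your proposal is correct and is essentially the paper's proof. You use the same sum/difference vectors, the same split of the $val^+$ objective into a term at most $1$ plus a Tsirelson bias term, and the same equality analysis: a common vector $W$ on each component of the support graph, decomposed coordinatewise into deterministic strategies. You bound only the bias part by the classical bias, whereas the paper bounds all of $val^+$ by $\sum_i w_i^2\Pr(S_i \text{ wins})$, which amounts to the same thing. The paper simply assumes the maximum defining $val^+$ is attained. If you want to justify that, note that the single-coordinate trick from the $val^{++}$ discussion does not respect the per-question normalizations. Instead, merge all coordinates sharing an answer pattern $(f,g)$ into one coordinate with entry $\sqrt{\sum_i (u_{a,x})_i^2}$; this keeps feasibility and, by Cauchy--Schwarz, does not decrease the objective, so dimension $|X|^{|A|}|Y|^{|B|}$ suffices and compactness gives attainment.
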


\proof 
We use the standard expression for $val^{*}(G)$ from Tsirelson's theorem.
Let $\beta^*(G) = 2 val^{*}(G) - 1$ be the maximum bias towards a correct answer that can be achieved in the game $G$.
Then, by Tsirelson's theorem \cite{T},
\begin{equation}
\label{eq:exp} 
\beta^*(G) = \max_{u'_a, v'_b} \sum_{a\in A, b\in B} M_{a, b} (u'_a, v'_b) 
\end{equation}
where
\begin{itemize}
\item
the game is described by a matrix $M$ 
defined by $M_{a, b}=p(a, b) (-1)^{s(a, b)}$ where $p(a, b)$ is the probability that Alice and Bob receive questions $a, b$ and
$s(a, b)$ is the winning condition defined in (\ref{eq:xor}).
\item
the maximum is over all choices of $u'_a$ for $a\in A$ and $v'_b$ for $b\in B$
with $\|u'_a\|=\|v'_b\|=1$.
\end{itemize}

Next, we show $val^+(G)\leq val^{*}(G)$. Let $u_c$, $v_d$ be the vectors that achieve the maximum
in the optimization problem for $val^+(G)$. 
We define $u'_a=u_{a, 0} -  u_{a, 1}$
and $v'_b=v_{b, 0} -  v_{b, 1}$. 
Since $\|u_{a, 0}\|^2+\|u_{a, 1}\|^2=1$ and, in each coordinate,
only one of two vectors is non-zero, we have $\|u'_a\|=1$.
Similarly, $\|v'_b\|=1$.

For this choice of $u'_a, v'_b$, we have
\[ \sum_{a, b} M_{a, b} (u'_a, v'_b) = \sum_{a, b, x, y} p(a, b) (-1)^{s(a, b)+x+y} (u_{a, x}, v_{b, y}) \]
\[ = 2 val^{+} (G) - \sum_{a, b, x, y} p(a, b) (u_{a, x}, v_{b, y}) .\]
$val^{*}(G)\geq val^+(G)$ now follows from
\begin{Lemma}
\label{lem:1}
\[ \sum_{a\in A, b\in B, x\in X, y\in Y} p(a, b) (u_{a, x}, v_{b, y}) \leq 1 .\] 
\end{Lemma}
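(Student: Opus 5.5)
The plan is to collapse the sum over answers, for each fixed pair of questions, into a single inner product of two aggregate vectors, and then apply Cauchy--Schwarz twice: once inside each question pair, and once for the averaging over $p$. For each question $a$ set $U_a=\sum_x u_{a,x}$, and for each $b$ set $V_b=\sum_y v_{b,y}$. By bilinearity,
\[ \sum_{x,y}(u_{a,x},v_{b,y})=(U_a,V_b). \]
The key observation is that $\|U_a\|^2=1$. Condition 3 of the Dinur--Steurer relaxation says that in each coordinate at most one $u_{a,x}$ is nonzero. Hence the cross terms $(u_{a,x},u_{a,x'})$ for $x\neq x'$ vanish, and
\[ \|U_a\|^2=\sum_x\|u_{a,x}\|^2=1 \]
by condition 2. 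In the same way $\|V_b\|=1$.

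Next I apply Cauchy--Schwarz to each pair of questions: $(U_a,V_b)\le\|U_a\|\,\|V_b\|=1$. Then
\[ \sum_{a,b,x,y}p(a,b)(u_{a,x},v_{b,y})=\sum_{a,b}p(a,b)(U_a,V_b)\le\sum_{a,b}p(a,b)=1, \]
since $p$ is a probability distribution. This proves the lemma.

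No step is a real obstacle. The only point needing care is the disjoint-support condition, which is what makes the cross terms vanish so that $\|U_a\|=1$ and not merely $\|U_a\|\le\sqrt{|X|}$. I also note that nonnegativity of the coordinates is not needed here. Here the lemma is applied with the normalization of $val^+$ (condition 2 holds for every question), not the averaged one of $val^{++}$. Under the averaged normalization the same argument would instead use $\sum_{a,b} p(a,b)\|U_a\|\|V_b\| \le \sum p(a,b)(\|U_a\|^2+\|V_b\|^2)/2$, which works only when the marginals of $p$ are uniform.
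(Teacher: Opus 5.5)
Your proof is correct and is essentially the paper's argument: the same aggregate vectors $U_a=\sum_x u_{a,x}$ and $V_b=\sum_y v_{b,y}$, unit norm from the disjoint-support condition, and then $(U_a,V_b)\le 1$ averaged over $p$. The only difference is that the paper writes the last step as the identity $(U_a,V_b)=1-\frac{1}{2}\|U_a-V_b\|^2$ instead of Cauchy--Schwarz, which makes the equality case $U_a=V_b$ whenever $p(a,b)>0$ (used next in Theorem~\ref{thm:2}) explicit; the equality case of Cauchy--Schwarz for unit vectors gives the same conclusion.
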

\proof
Let $u_a=u_{a, 0}+u_{a, 1}$ and let $v_b$ be defined similarly.
Then, $\|u_a\|=\|v_b\|=1$. 
We have
\[ \sum_{a\in A, b\in B, x\in X, y\in Y} p(a, b) (u_{a, x}, v_{b, y}) =
\sum_{a\in A, b\in B} p(a, b) (u_a, v_b) \]
\[ =
1 - \frac{1}{2} \sum_{a\in A, b\in B} p(a, b) \| u_a - v_b \|^2
\leq  1. \]
\qed

If $val^+(G)= val^{*}(G)$, we must have equality in Lemma \ref{lem:1}. This means that  
$\| u_a - v_b\|= 0$ for all pairs $(a, b)$ with $p(a, b) > 0$. We now show that this implies $val(G)=val^*(G)$.

Let $H$ be a bipartite graph with the set of vertices $A\cup B$ and the set of edges $(a, b): p(a, b)>0$.
If $H$ is connected, then equality in Lemma \ref{lem:1} implies that all $u_a$ and $v_b$ are equal.

Let \(w=(w_i)_{i=1}^m\) be the vector to which all the vectors
\(u_a\) and \(v_b\) are equal. Since \(\lVert w\rVert=1\), we have
\[
\sum_{i=1}^m w_i^2=1.
\]
For every coordinate \(i\) such that \(w_i>0\), define functions
\(f_i:A\to\{0,1\}\) and \(g_i:B\to\{0,1\}\) as follows. For every
question \(a\), let \(f_i(a)\) be the unique answer satisfying
\[
\bigl(u_{a,f_i(a)}\bigr)_i=w_i.
\]
Such a unique answer exists because
\(u_{a,0}+u_{a,1}=w\) and, in each coordinate, at most one of
\(u_{a,0}\) and \(u_{a,1}\) is nonzero. Similarly, for every question
\(b\), let \(g_i(b)\) be the unique answer satisfying
\[
\bigl(v_{b,g_i(b)}\bigr)_i=w_i.
\]
Let \(S_i=(f_i,g_i)\) be the deterministic strategy in which Alice answers $x=f_i(A)$ and Bob answers $y=g_i(B)$. Then,
\[
\begin{aligned}
\operatorname{val}_{+}(G)
&=
\sum_{a,b,x,y}
G_{(a,x),(b,y)}
\left\langle u_{a,x},v_{b,y}\right\rangle \\
&=
\sum_{i:w_i>0}w_i^2
\sum_{a,b}
G_{(a,f_i(a)),(b,g_i(b))} \\
&=
\sum_{i:w_i>0}w_i^2
\Pr\bigl(S_i\text{ wins}\bigr).
\end{aligned}
\]
Since \(S_i\) is a deterministic strategy,
\[
\Pr\bigl(S_i\text{ wins}\bigr)\leq \operatorname{val}(G).
\]
Consequently,
\[
\operatorname{val}_{+}(G)
\leq
\sum_{i:w_i>0}w_i^2\operatorname{val}(G)
=
\operatorname{val}(G).
\]
The reverse inequality
\(\operatorname{val}(G)\leq\operatorname{val}_{+}(G)\) holds by
definition, and therefore
\[
\operatorname{val}(G)=\operatorname{val}_{+}(G)
=\operatorname{val}^{*}(G),
\]
as required.

If $H$ is not connected, we can apply the same argument to each connected component. The classical, relaxed and entangled values 
decompose as weighted averages over connected components.
\qed

From Theorem \ref{thm:2}, it follows that

\begin{Corollary}
$val(G)<val^*(G)$ implies $val_{\infty}(G)<val^*(G)$.
\end{Corollary}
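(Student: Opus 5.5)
The plan is to chain the Dinur--Steurer lemma with Theorem~\ref{thm:2}. Since every XOR game is a projection game, the Dinur--Steurer lemma applies to $G$ and gives $val(G^{\otimes n}) \leq (val^+(G))^n$ for every $n\geq 1$. Taking $n$-th roots,
\[ val(G^{\otimes n})^{1/n} \leq val^+(G), \]
and letting $n\to\infty$ (the limit exists by supermultiplicativity and Fekete's lemma, as noted in the introduction) yields $val_\infty(G)\leq val^+(G)$.

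Next I would invoke Theorem~\ref{thm:2}. It states that $val^+(G)\leq val^*(G)$, with equality if and only if $val(G)=val^*(G)$. Under the hypothesis $val(G)<val^*(G)$, the equality case is excluded, so $val^+(G)<val^*(G)$ strictly. Combining this with the previous step gives
\[ val_\infty(G)\leq val^+(G)<val^*(G), \]
which is the claim.

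One point needs checking. The Corollary compares $val_\infty(G)$ with $val^*(G)$, but the meaningful comparison is with the asymptotic entangled value. For XOR games these coincide, because perfect parallel repetition \cite{CSUU08} gives $val^*(G^{\otimes n})^{1/n}=val^*(G)$. So the statement also says that the classical and quantum asymptotic rates differ, and I would remark on this at the end.

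There is no real obstacle: the substantive work is already contained in Theorem~\ref{thm:2}. The only care needed is the order of operations. The strict gap $val^+(G)<val^*(G)$ is a fixed, $n$-independent quantity, so it survives the limit in $n$. This is why the argument goes through $val^+$, which is multiplicative by Dinur--Steurer, rather than through $val(G^{\otimes n})$ directly.
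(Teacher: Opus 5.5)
Your proposal is correct and is essentially the paper's proof. The Dinur--Steurer lemma, which applies because XOR games are projection games, gives $val_\infty(G)\leq val^+(G)$, and the equality clause of Theorem~\ref{thm:2} turns the hypothesis $val(G)<val^*(G)$ into $val^+(G)<val^*(G)$; your remark that $val^*(G)$ is also the asymptotic entangled rate by perfect parallel repetition is a correct, optional addition.
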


\proof
From Theorem \ref{thm:2}, $val(G)<val^*(G)$ implies $val^+(G)<val^*(G)$. 
The results of Dinur and Steurer \cite{DS} imply that $val_{\infty}(G)\leq val^+(G)$.
\qed
\section{Conclusion}

We have determined the asymptotic classical value of the parallel-repeated
CHSH game. Using the multiplicative relaxation
\(\operatorname{val}_{+}\) of \cite{DS}, we proved that, for every
\(n\geq 1\),
\[
\operatorname{val}(\mathrm{CHSH}^{\otimes n})
\leq
\left(\frac{1+\sqrt{5}}{4}\right)^n.
\]
Together with the matching lower bound of Barak et al., this gives
\[
\operatorname{val}_{\infty}(\mathrm{CHSH})
=
\frac{1+\sqrt{5}}{4}.
\]

We also showed that every XOR game \(G\) satisfies
\[
\operatorname{val}_{+}(G)\leq\operatorname{val}^{*}(G),
\]
with equality if and only if
\[
\operatorname{val}(G)=\operatorname{val}^{*}(G).
\]
Thus, every quantum-classical gap for an XOR game remains strict at the
level of the corresponding asymptotic values for the corresponding repeated game.

A natural question is whether this qualitative statement can be strengthened quantitatively. More precisely, is there a universal
constant \(c>0\) such that every XOR game satisfies
\[
\operatorname{val}_{+}(G)
\leq
c\,\operatorname{val}(G)
+
(1-c)\operatorname{val}^{*}(G)?
\]
For example, \(c=1/2\) would imply that at least one half of the
single-copy quantum-classical gap is preserved between the asymptotic
classical and quantum winning rates.


\begin{thebibliography}{9}
\bibitem{A}
S. Aaronson.  The NEW Ten Most Annoying Questions in Quantum Computing.
Blog post, May 13, 2014, https://scottaaronson.blog/?p=1792.

\bibitem{Araujo}
M. Araújo, F. Hirsch, M. Quintino. Bell nonlocality with a single shot. 
{\em Quantum} 4:353, 2020.

\bibitem{B+}
B. Barak, M. Hardt, I. Haviv, A. Rao, O. Regev, D. Steurer.
Rounding parallel repetitions of unique games.
{\em Proceedings of FOCS'2008}, pp. 374-383.

\bibitem{T}
B. Cirel'son (Tsirelson). 
Quantum generalizations of Bell's inequality. {\em Letters in Mathematical
Physics}, 4:93-100, 1980.

\bibitem{CHSH69}
J.~F. Clauser, M.~A. Horne, A. Shimony, and R.~A. Holt,
Proposed experiment to test local hidden-variable theories,
\emph{Physical Review Letters}, 23:880-884, 1969.

\bibitem{Cleve}
R. Cleve, P. H\o yer, B. Toner, J. Watrous:
Consequences and Limits of Nonlocal Strategies. 
{\em Proceedings of CCC'2004}, pp. 236-249.

\bibitem{CSUU08}
R. Cleve, W. Slofstra, F. Unger, and S. Upadhyay,
Perfect parallel repetition theorem for quantum XOR proof systems,
\emph{Computational Complexity}, 17:282-299, 2008.

\bibitem{DS}
I. Dinur, D. Steurer. 
Analytical approach to parallel repetition.
{\em Proceedings of STOC'2014}, pp. 624-633.

\bibitem{DSV15}
I. Dinur, D. Steurer, and T. Vidick,
A parallel repetition theorem for entangled projection games,
\emph{Computational Complexity}, 24:201-254, 2015.

\bibitem{Hol09}
T. Holenstein,
Parallel repetition: Simplification and the no-signaling case,
\emph{Theory of Computing}, 5:141-172, 2009.

\bibitem{Raz98}
R. Raz,
A parallel repetition theorem,
\emph{SIAM Journal on Computing}, 27:763-803, 1998.


\end{thebibliography}
\end{document}